\def\useieeelayout{0}
\def\showall{0}
\newcommand{\inConf}[1]{\if\useieeelayout1{#1}\fi\if\showall1{\color{green!50!black}In Conf: #1}\fi}
\newcommand{\inArxiv}[1]{\if\useieeelayout0{#1}\else\if\showall1{\color{blue}In ArXiV: #1}\fi\fi}
\newtheorem{theorem}{Theorem}
\newtheorem{lemma}{Lemma}
\newtheorem{corollary}{Corollary}
\newtheorem{proposition}{Proposition}
\newtheorem{definition}{Definition}
\newtheorem{problem}{Problem}
\newtheorem{assumption}{Assumption}
\definecolor{mycolor1}{rgb}{0.00000,0.44700,0.74100}%
\definecolor{mycolor2}{rgb}{0.85000,0.32500,0.09800}%
\pgfplotsset{compat=newest}
\pgfplotsset{plot coordinates/math parser=false}
\pgfplotsset{compat=newest}
\definecolor{steelblue}{RGB}{70,130,180}
\definecolor{mycolor1}{rgb}{0.00000,0.44700,0.74100}%
\def\bbn{\mathbb N}
\def\bbz{\mathbb Z}
\def\bbr{\mathbb R}
\DeclareMathOperator*{\argmin}{argmin}
\newcommand{\norm}[1]{\left\|#1 \right\|}
\newcommand{\abs}[1]{\left |#1 \right\vert}
\title{\LARGE \bf
Distributed Adaptive Control For Uncertain Networks}
\author{Venkatraman Renganathan, Anders Rantzer, and Olle Kjellqvist 
\thanks{
This project has received funding from the European Research Council (ERC) under the European Union’s Horizon 2020 research and innovation program under grant agreement No 834142 (Scalable Control). The authors are with the Department of Automatic Control LTH, Lund University, Sweden and they are members of the ELLIIT Strategic Research Area. (E-mail: (venkatraman.renganathan,anders.rantzer, olle.kjellqvist)@control.lth.se).
}
}
\newtheorem{theorem}{Theorem}
\newtheorem{lemma}{Lemma}
\newtheorem{proposition}{Proposition}
\newtheorem{problem}{Problem}
\newtheorem{assumption}{Assumption}
\definecolor{mycolor1}{rgb}{0.00000,0.44700,0.74100}%
\definecolor{mycolor2}{rgb}{0.85000,0.32500,0.09800}%
\pgfplotsset{compat=newest}
\pgfplotsset{plot coordinates/math parser=false}
\pgfplotsset{compat=newest}
\definecolor{steelblue}{RGB}{70,130,180}
\definecolor{mycolor1}{rgb}{0.00000,0.44700,0.74100}%
\title{Distributed Adaptive Control For Uncertain Networks}
\author{\name Venkatraman Renganathan  \email       venkatraman.renganathan@control.lth.se \\
      \addr Department of Automatic Control - LTH \\ Lund University, Sweden
      \AND
      \name Anders Rantzer \email anders.rantzer@control.lth.se \\
      \addr Department of Automatic Control - LTH \\ Lund University, Sweden
      \AND
      \name Olle Kjellqvist \email olle.kjellqvist@control.lth.se \\
      \addr Department of Automatic Control - LTH \\ Lund University, Sweden
      }
\begin{document}

\maketitle
\thispagestyle{empty}
\pagestyle{empty}

\begin{abstract}
Control of network systems with uncertain local dynamics has remained an open problem for a long time. In this paper, a distributed minimax adaptive control algorithm is proposed for such networks whose local dynamics has an uncertain parameter possibly taking finite number of values. To hedge against this uncertainty, each node in the network collects the historical data of its neighboring nodes to decide its control action along its edges by finding the parameter that best describes the observed disturbance trajectory. Our proposed distributed adaptive controller is scalable and we give both lower and upper bounds for its $\ell_{2}$ gain. Numerical simulations demonstrate that once each node has sufficiently estimated its local uncertainty, the distributed minimax adaptive controller behaves like the optimal distributed $\mathcal{H}_{\infty}$ controller in hindsight. 

\end{abstract}

\section{Introduction}
\label{sec:introduction}
Control of large-scale and complex systems is often performed in a distributed manner \cite{antonelli2013interconnected}, as it is practically difficult for every agent in the network to have access to the global information about the overall networked system while deciding its control actions. On the other hand, designing optimal distributed control laws when the networked system dynamics are uncertain still remains an open problem. This naturally calls for a learning-based controller to be employed in such uncertain settings. Learning based controllers for network systems is very much in its infancy and recently a scalable solution was proposed in \cite{nali_network_learning}. Adaptive control in the centralised setting has been investigated a lot starting from \cite{astromwittenmark}, where an adaptive controller was shown to learn the system dynamics online through sufficient parameter estimation and then control it. Multiple model-based adaptive control formulation have been 
known to handle uncertainty in system dynamics and an extensive literature in that topic can be found in \cite{anderson2000multiple, hespanha2001multiple, anderson2001multiple, kuipers2010multiple, narendra1997adaptive, narendra2011changing}. Another promising approach was introduced in \cite{didinsky1994minimax} with the minimax problem formulation where the resulting full information problem of higher dimension was solved using Dynamic Programming. Minimax adaptive control formulation was specialized to linear systems with unknown sign for state matrix in \cite{rantzer2020minimax}, and to finite sets of linear systems in \cite{rantzer2021minimax, cederberg2022synthesis, renganathan2023online}. In general, minimax adaptive control problems are challenging mainly due to the exploration and exploitation trade-off that inevitably comes with the learning and the controlling procedure. 

On the other hand, designing optimal distributed control laws that address the uncertainty prevailing over true model of the networked system still remains an open problem. Though we can approach this problem from the multiple model-based adaptive control techniques, the resulting controller does not facilitate a distributed implementation as the controller solutions are often dense in nature. Aiming for a distributed implementation adds an additional layer of complexity to the existing challenges of any centralised adaptive control algorithm. However, there are certain classes of systems for which scalable implementation of distributed minimax adaptive control is possible, such as systems with inherent structure that open up the door for optimal control laws to be structured as well. Such system models are common in many infrastructural networks such as irrigation and transportation networks. 

Control of spatially invariant systems such as linear models of transportation and buffer networks in \cite{bamieh2002distributed} has paved the ways for designing distributed robust controllers for such special class of systems. We consider similar class of systems in our research problem where the original system comprises of subsystems with local dynamics, that only share control inputs. A closed-form expression for the distributed $\mathcal{H}_{\infty}$ optimal state feedback law for systems with symmetric and Schur state matrix was computed in \cite{lidstrom2017h}, where the total networked system comprised of subsystems with local dynamics, that share only control inputs and each control input affecting only two subsystems. Similarly, a closed-form expression for a decentralised $\mathcal{H}_{\infty}$ optimal controller with diagonal gain matrix for network systems having acyclic graphs was computed in \cite{vladu2022decentralized}. In all these previous works, the network dynamics are known exactly. 

\textit{Contributions:} We extend the problem setting in \cite{lidstrom2017h} by considering finite number of possible local dynamics in each node and control action along each edge. The highlight of our work is that we propose learning in network systems for addressing uncertainty in local dynamics along with disturbance rejection. Our main contributions are as follows:
\begin{enumerate}
    \item A scalable \& distributed minimax adaptive control algorithm for uncertain networked systems is developed. Each node in the network hedges against the uncertainty in its local dynamics by maintaining the history of just its neighboring nodes and finds the controller at any time by choosing the model that best describes the local disturbance trajectory (See equation \eqref{eqn_distributed_minimax_control}).
    \item Both lower and upper bounds for the $\ell_{2}$ gain associated with the proposed distributed minimax adaptive control algorithm are given (See Lemma \ref{lemma_l2_gain_lower_bound} \& Theorem \ref{theorem_2}).
    \item The efficacy of the proposed distributed minimax adaptive control is demonstrated using a large-scale buffer network with $10^{4}$ nodes (where computing a centralized controller is costly) where controller implementation does not require the knowledge about the $\ell_{2}$ gain.
\end{enumerate}
Following a short summary of notations, this paper is organized as follows: In \S\ref{sec:problemformulation}, the main problem formulation of distributed minimax adaptive controller is presented. The proposed distributed implementation of the minimax adaptive control algorithm along with the computation of the lower and upper bounds for its $\ell_{2}$ gain are given in \S\ref{sec:analysis}. The proposed algorithm is then demonstrated in \S\ref{sec:simulation}. Finally, the main findings of the paper are summarised in \S\ref{sec:conclusion} along with some directions for future research.

\section*{Notations}
The cardinality of the set $A$ is denoted by $\left | A \right \vert$. The set of real numbers, integers and the natural numbers are denoted by $\bbr, \bbz, \bbn$ respectively. For $N \in \mathbb{N}$, we denote by $\llbracket N \rrbracket := \{1,\dots,N \}$.
A vector of size $n$ with all values being one is denoted by $\mathbf{1}_{n}$. For matrix $A \in \bbr^{n \times n}$, we denote all its eigenvalues, transpose and its trace by $\mathrm{eig}(A)$, $A^{\top}$ and $\mathbf{Tr}(A)$ respectively. We denote by $\mathbb{S}^{n}$ the set of symmetric matrices in $\bbr^{n \times n}$ and the set of positive definite and positive semi-definite matrices by $\mathbb{S}^{n}_{++}$ and $\mathbb{S}^{n}_{+}$ respectively. A symmetric matrix $P \in \mathbb{S}^{n}$ is called positive definite (positive semi-definite) if $\forall x \in \bbr^{n} \backslash \{0\}, x^{\top} P x > 0 \, \, (x^{\top} P x \geq 0)$ and is denoted by $P \succ 0 (P \succeq 0)$. An identity matrix in dimension $n$ is denoted by $I_{n}$. Given $x \in \bbr^{n}, A \in \bbr^{n \times n}, B \in \bbr^{n \times n}$, the notations $\abs{x}^{2}_{A}$ and ${\left \| B \right \Vert}^{2}_{A}$ mean $x^{\top} A x$ and $\mathbf{Tr}\left(B^{\top} A B \right)$ respectively. 

\section{Problem Formulation}
\label{sec:problemformulation}
Control of spatially invariant systems has a rich literature in control theory (see \cite{bamieh2002distributed} and the references therein) where networked systems such as linear models of transportation and buffer networks are studied in great detail. We consider similar class of systems where the original system comprises of subsystems with local dynamics, that only share control inputs. Note that such systems can be naturally associated with a graph and hence we depict the subsystems as nodes and control inputs as edges between the nodes they affect. 

\subsection{Network Model}
We model such networked dynamical system in discrete time setting with a graph $\mathcal{G}$ comprising a node set $\mathcal{V}$ representing $|\mathcal{V}| = N$ subsystems and edge set $\mathcal{E} \subset \mathcal{V} \times \mathcal{V}$ representing a set of $|\mathcal{E}| = E$ communication links amongst the subsystems. The incidence matrix encoding the edge set information is denoted by $\mathcal{I} \in \mathbb{R}^{N \times E}$. We denote by $\mathcal{N}_{i} = \{j \in \mathcal{V} : (j, i) \in \mathcal{E}\}$, the neighbor set of agent $i$, whose states are available to agent $i$ through $\mathcal{E}$. The degree of node $i \in \mathcal{V}$ is denoted as $d_{i} := \abs{\mathcal{N}_{i}}$. The set of inclusive neighbors of agent $i$ is denoted by $\mathcal{J}_{i} := \{i\} \cup \mathcal{N}_{i}$. Let $\mathbf{d} := \max\{ d_{i} \mid i \in \mathcal{V} \}$. We associate with each node $i \in \mathcal{V}$, a state $x_{i}(t) \in \mathbb{R}$ at time $t \in \mathbb{N}$. Each subsystem $\mathbf{\Sigma_{i}}$ corresponding to node $i \in \mathcal{V}$ updates its own states by interacting with its neighbors as 
\begin{align} \label{eqn_node_dynamics}
    x_{i}(t+1) = a_{i} x_{i}(t) + b \sum_{j \in \mathcal{N}_{i}} (u_{i}(t) - u_{j}(t))  + w_{i}(t),
\end{align}
where $a_{i} \in (0,1), b > 0$. The additive disturbance $w_{i}(t) \in \mathbb{R}$ affecting the node $i \in \mathcal{V}$ is adversarial in nature. The control input between two nodes $(i,j) \in \mathcal{E}$ is such that what is drawn from subsystem $j$ is added to subsystem $i$ and hence it only affects two nodes. Note that the dynamics of each node in the network is coupled with the other nodes only through their control inputs. The concatenated states of all nodes in \eqref{eqn_node_dynamics}, the corresponding control inputs along the edges and the adversarial disturbances acting on each node are denoted as
\begin{align}
x(t) &= \begin{bmatrix} x_{1}(t) & \cdots & x_{N}(t) \end{bmatrix}^{\top} \in \mathbb{R}^{N}, \\
u(t) &= \{u_{i}(t)  - u_{j}(t)\}_{(i,j) \in \mathcal{E}} \in \mathbb{R}^{E}, \text{ and }\\
w(t) &= \begin{bmatrix} w_{1}(t) & \cdots & w_{N}(t) \end{bmatrix}^{\top} \in \mathbb{R}^{N}    
\end{align}
respectively. Hence, the system described by \eqref{eqn_node_dynamics} can be equivalently written compactly as 
\begin{align} \label{eqn_compact_dynamics}
x_{t+1} = A x_{t} + B u_{t} + w_{t},     
\end{align}
with $A \in \mathbb{R}^{N \times N}$ being symmetric and Schur stable, and $B = b \mathcal{I}$, with $\mathcal{I} \in \mathbb{R}^{N \times E}$ being the incidence matrix of the underlying graph. Note that, $BB^{\top} = b^{2} \mathcal{I} \mathcal{I}^{\top}= b^{2}\mathbf{L}$, where $\mathbf{L}$ denotes the Laplacian matrix associated with the graph $\mathcal{G}$.

\subsection{An Optimal Distributed $\mathcal{H}_{\infty}$ Controller}
For systems described by \eqref{eqn_node_dynamics}, it was shown in \cite{lidstrom2017h} that an optimal distributed $\mathcal{H}_{\infty}$ controller is given by 
\begin{align} \label{eqn_opt_distributed_Hinf_control}
    K = B^{\top} (A-I)^{-1},
\end{align}
as long as the dynamics of each node $i \in \mathcal{V}$ satisfies the following condition 
\begin{align} \label{eqn_communication_condition}
a^{2}_{i} + 2 b^{2} d_{i} < a_{i}.
\end{align} 
The condition \eqref{eqn_communication_condition} is related to the speed of information propagation through the network as well as its connectivity defined using the parameters $a_i, b, d_{i}$, and the bound on the maximum eigenvalue of the symmetric normalized Laplacian matrix of the underlying network's graph. More details on the local condition of nodes described by \eqref{eqn_communication_condition} is available in subsection IV.B of \cite{lidstrom2017h}. Further, \eqref{eqn_communication_condition} can be equivalently written using the compact notations as 
\begin{align} \label{eqn_compact_communication_constraint}
A^{2} + BB^{\top} \prec A.    
\end{align} 
Since $a_{i} \in (0,1)$, \eqref{eqn_node_dynamics} is inherently stable. Then, it is best to quantify the amplification caused by just the disturbance to the system when a zero control is applied. We have the following lemma that computes the $\ell_{2}$ gain of the subsystem $\mathbf{\Sigma_{i}}$ with zero control inputs.
\begin{lemma}
Let $u_{i}(t) = u_{j}(t)$ for all $j \in \mathcal{N}_{i}$ corresponding to node $i \in \mathcal{V}$ in \eqref{eqn_node_dynamics}. Then, 
\begin{align} \label{eqn_ell_2_gain_zero_controls}
&\norm{\mathbf{\Sigma_{i}}}_{\mathcal{H}_{\infty}} 
\leq 
\frac{1}{1 - \left( \frac{1}{2} + \sqrt{\frac{1}{4} - 2 b^{2} d_{i} } \right) }, \quad \text{if}, \\
&\frac{1}{2} - \sqrt{\frac{1}{4} - 2 b^{2} d_{i}} < a_{i} < \frac{1}{2} + \sqrt{\frac{1}{4} - 2 b^{2} d_{i}}, \quad \text{and} \label{eqn_bounds_ai} \\
&b < \sqrt{\frac{1}{8 \mathbf{d}}}. \label{eqn_b_condition}    
\end{align}
\end{lemma}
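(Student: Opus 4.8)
The plan is to exploit the hypothesis that $u_i(t) = u_j(t)$ for all $j \in \mathcal{N}_i$, which makes the coupling term $b \sum_{j \in \mathcal{N}_i}(u_i(t) - u_j(t))$ in \eqref{eqn_node_dynamics} vanish identically. Thus $\mathbf{\Sigma_i}$ collapses to the scalar, single-input single-output, time-invariant recursion $x_i(t+1) = a_i x_i(t) + w_i(t)$, which is internally stable since $a_i \in (0,1)$, and $\norm{\mathbf{\Sigma_i}}_{\mathcal{H}_{\infty}}$ is simply the $\ell_2$-induced gain from $w_i$ to $x_i$ of this elementary system. The lemma then reduces to bounding that gain.

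Next I would evaluate the gain. From $x_i(0) = 0$ one has $x_i(t) = \sum_{k=0}^{t-1} a_i^{\,t-1-k} w_i(k)$, a convolution with the nonnegative summable kernel $(a_i^k)_{k \ge 0}$; by Young's inequality its $\ell_2$-induced gain is at most $\sum_{k \ge 0} a_i^k = (1-a_i)^{-1}$, with equality (attained at zero frequency) because the kernel is nonnegative, and a nonzero initial state adds only an exponentially decaying transient that does not affect the $\ell_2$ gain. Equivalently, in the frequency domain the transfer function is $G_i(z) = (z - a_i)^{-1}$, so $\norm{\mathbf{\Sigma_i}}_{\mathcal{H}_{\infty}} = \sup_{\theta}\,|e^{\mathsf{i}\theta} - a_i|^{-1} = (1-a_i)^{-1}$, since $|e^{\mathsf{i}\theta} - a_i|^2 = 1 - 2 a_i \cos\theta + a_i^2$ is minimized at $\theta = 0$ when $a_i > 0$. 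Either route yields $\norm{\mathbf{\Sigma_i}}_{\mathcal{H}_{\infty}} = (1-a_i)^{-1}$.

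Finally I would convert this exact value into the stated bound. Condition \eqref{eqn_b_condition}, $b < \sqrt{1/(8\mathbf{d})}$, guarantees $\tfrac14 - 2 b^2 d_i \ge \tfrac14 - 2 b^2 \mathbf{d} > 0$ for every $i \in \mathcal{V}$, so $\alpha_i := \tfrac12 + \sqrt{\tfrac14 - 2 b^2 d_i}$ is well-defined; moreover the square root is strictly below $\tfrac12$, hence $\alpha_i \in (\tfrac12, 1)$ and the right-hand side of \eqref{eqn_ell_2_gain_zero_controls} is a finite positive number. Hypothesis \eqref{eqn_bounds_ai} is precisely the statement $a_i < \alpha_i$ — indeed \eqref{eqn_bounds_ai} is just the standing condition \eqref{eqn_communication_condition} rewritten by solving the quadratic inequality $a_i^2 - a_i + 2 b^2 d_i < 0$ for $a_i$ — and since $t \mapsto (1-t)^{-1}$ is increasing on $(0,1)$ we obtain $\norm{\mathbf{\Sigma_i}}_{\mathcal{H}_{\infty}} = (1-a_i)^{-1} < (1-\alpha_i)^{-1}$, which is exactly \eqref{eqn_ell_2_gain_zero_controls}.

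There is no deep obstacle here; the only points requiring care are the two reductions — recognizing that equal neighboring inputs decouple $\mathbf{\Sigma_i}$ into a scalar system whose $\mathcal{H}_{\infty}$ norm is the DC value $(1-a_i)^{-1}$ (rather than a value attained in the interior of the unit disk), and matching the interval in \eqref{eqn_bounds_ai} with the roots of the quadratic appearing in \eqref{eqn_communication_condition} so that the monotonicity argument can be invoked.
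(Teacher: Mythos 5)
Your proposal is correct and follows essentially the same route as the paper: rewrite the standing condition \eqref{eqn_communication_condition} as the quadratic inequality whose roots give the interval \eqref{eqn_bounds_ai} (with \eqref{eqn_b_condition} ensuring the discriminant is positive), and bound the gain of the decoupled first-order system $x_i(t+1)=a_i x_i(t)+w_i(t)$, namely $(1-a_i)^{-1}$, by the value at the right endpoint of that interval. The only difference is that you spell out the computation $\norm{\mathbf{\Sigma_i}}_{\mathcal{H}_\infty}=(1-a_i)^{-1}$ (via the convolution kernel or the transfer function $1/(z-a_i)$), which the paper simply asserts as the standard gain of a first-order stable system.
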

\begin{proof}
We can equivalently rewrite \eqref{eqn_communication_condition} as
\begin{align*} 
&\left(a_{i} - \frac{1}{2} \right)^{2} + 2 b^{2} d_{i} - \frac{1}{4} < 0 \\
\iff
&\frac{1}{2} - \sqrt{\frac{1}{4} - 2 b^{2} d_{i}} < a_{i} < \frac{1}{2} + \sqrt{\frac{1}{4} - 2 b^{2} d_{i}}.
\end{align*}
Since $a_{i} \in (0,1)$ and $d_{i} \geq 1, \forall i \in \mathcal{V}$, we see that 
\begin{align*}
\frac{1}{4} - 2 b^{2} d_{i} > 0
\iff
b < \sqrt{\frac{1}{8 d_{i}}}.    
\end{align*}
Since the above condition is true $\forall i \in \mathcal{V}$, we get \eqref{eqn_b_condition}. Note that \eqref{eqn_node_dynamics} relates to a first-order stable system with zero control input for which its $\ell_{2}$ gain is given by
\begin{align*} 
\norm{\mathbf{\Sigma_{i}}}_{\mathcal{H}_{\infty}} 
\leq 
\frac{1}{1 - \left( \frac{1}{2} + \sqrt{\frac{1}{4} - 2 b^{2} d_{i} } \right) }.
\end{align*}
\end{proof}
\subsection{Distributed Minimax Adaptive Control Problem}
In the problem setting considered in this paper, the true system model $a_{i}$ in \eqref{eqn_node_dynamics} governing the dynamics of each node $i \in \mathcal{V}$ is \emph{unknown}. We have the following two assumptions to characterize the uncertainty in the local dynamics of each node in the network. 
\begin{assumption} \label{assume_1}
The parameter $a_{i}$ in \eqref{eqn_node_dynamics} belongs to a finite set $\mathbf{A_{i}}$ with $\abs{\mathbf{A_{i}}} = M \in \mathbb{N}_{\geq 1}$ elements. That is, $\forall i \in \mathcal{V}$,
\begin{equation} \label{eqn_node_uncertainty}
    a_{i} \in \mathbf{A_{i}} := \{ a_{i} \in (0,1) \mid \eqref{eqn_communication_condition} \text{ is satisfied} \}.
\end{equation}    
\end{assumption}

\begin{assumption} \label{assume_2}
The dynamics of node $i \in \mathcal{V}$ is independent of the dynamics of its neighbor $j \in \mathcal{N}_{i}$. This means that the choice of any one of the $M$ values of $a_i \in \mathbf{A_{i}}$ being the true $a_{i}$ does not influence any one of the $M$ values of $a_j \in \mathbf{A_{j}}$ being the true $a_{j}$ for every neighbor $j \in \mathcal{N}_{i}$. Hence, there are $F := M^{N}$ possible realisations of system matrix $A$. 
\end{assumption}

Given that the local dynamics of each node in the network is being uncertain, the controller should first learn the local dynamics accurately and then guarantee robustness against the adversarial disturbance acting on the node. This clearly calls for a learning-based control policy which aims to learn the uncertain parameter through the collected history of system data. However, we would also need to ensure that the learning based control policy does not result in a $\ell_{2}$ gain for the system that is worse than the one with zero control input given by \eqref{eqn_ell_2_gain_zero_controls}. A control policy is termed as \emph{admissible} if it is stabilising and has causal implementation. The control input of node $i \in \mathcal{V}$ depends upon the historical data of only 
its inclusive neighbors and is given by
\begin{align} \label{eqn_control_policy}
    u_{i}(t) 
    = \pi_{t}\left( \left\{ x_{j}(\tau) \right\}^{t}_{\tau = 0}, \left\{ u_{j}(\tau) \right\}^{t-1}_{\tau = 0} \mid \forall j \in \mathcal{J}_{i} \right),
\end{align} 
where $\pi_{t} \in \Pi$ and $\Pi$ denotes the set of all admissible control policies. To this end, we now formally define the distributed minimax adaptive control problem statement.

\begin{problem} \label{problem_1}
Let $\gamma > 0$ and $T \in \mathbb{N}$ denote the given time horizon. With the uncertainty set $\mathbf{A_{i}}$ for every node $i \in \mathcal{V}$ given by \eqref{eqn_node_uncertainty}, find a distributed control policy $\pi_{t} \in \Pi, \forall t \in [0,T]$ in the lines of \eqref{eqn_control_policy} to solve
\begin{align}
\label{eqn_cost}
\inf_{\pi \in \Pi} \underbrace{\sup_{a_{i} \in \mathbf{A_{i}}, w_{i}, T} \sum^{T}_{\tau=0} \left( \abs{x_{i}(\tau)}^{2} + \abs{u_{i}(\tau)}^{2} - \gamma^{2} \abs{w_{i}(\tau)}^{2} \right)}_{J^{\pi}_{i}}.
\end{align}
\end{problem}
It is evident from \eqref{eqn_cost} that there is a dynamic game being played between the control input $u_{i}$ (minimising player) and the adversaries\footnote{Note that the supremum with respect to $T$ in \eqref{eqn_cost} can be achieved by simply letting $T \rightarrow \infty$.} namely the disturbance $w_{i}$ and the uncertain parameter $a_{i} \in \mathbf{A_{i}}$ (maximising players). 
Specifically, we are interested in obtaining a condition on the $\ell_{2}$ gain of the network system $\gamma$ such that the dynamic game associated with the Problem \ref{problem_1} has a finite value (that is, $J^{\pi}_{i} < \infty$).

\subsection{Avoiding the Perils of Combinatorial Setting}

For each node $i \in \mathcal{V}$, denote the minimum and maximum value of $a_{i}$ respectively as $\underline{a}_{i} := \min\{ \mathbf{A_{i}} \}$, and $\overline{a}_{i} := \max\{ \mathbf{A_{i}} \}$. Similarly, denote the network level minimum and maximum values as $\overline{\mathbf{a}} = \max\{\overline{a}_{1}, \dots, \overline{a}_{N}\}$ and $\underline{\mathbf{a}} = \min\{ \underline{a}_{1}, \dots, \underline{a}_{N} \}$ respectively. Further, denote $\underline{A} := \mathrm{diag}(\underline{a}_{1}, \dots, \underline{a}_{N})$ and $\overline{A} := \mathrm{diag}(\overline{a}_{1}, \dots, \overline{a}_{N})$ respectively. Note that $\forall p \in \left \llbracket F \right \rrbracket$, we see that 
\begin{align} \label{eqn_A_matrix_bounds}
\underline{\mathbf{a}} I_{N} \preceq \underline{A} \preceq A_{p} \preceq \overline{A} \preceq \overline{\mathbf{a}} I_{N}.    
\end{align}

\begin{proposition} \label{proposition_matrix_bnds}
Given \eqref{eqn_A_matrix_bounds}, we observe that $\forall p \in \left \llbracket F \right \rrbracket$,
{
\footnotesize
\begin{align}
&\frac{1}{(\underline{\mathbf{a}} - 1)} I_{N} \succeq (\underline{A} - I)^{-1} \succeq (A_{p} - I)^{-1} \succeq (\overline{A} - I)^{-1} \succeq \frac{1}{(\overline{\mathbf{a}} - 1)} I_{N} \label{eqn_matrix_bounds_1}\\
&\frac{1}{(1 - \underline{\mathbf{a}})} I_{N} \preceq (I - \underline{A})^{-1} \preceq (I - A_{p})^{-1} \preceq (I - \overline{A})^{-1} \preceq \frac{1}{(1 - \overline{\mathbf{a}})} I_{N} \label{eqn_matrix_bounds_2}
\end{align}
}
\end{proposition}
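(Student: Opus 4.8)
The plan is to reduce both chains \eqref{eqn_matrix_bounds_1} and \eqref{eqn_matrix_bounds_2} to a single elementary fact: on the cone of positive definite matrices, inversion is \emph{order reversing}, i.e. if $0 \prec X \preceq Y$ then $Y^{-1} \preceq X^{-1}$. Everything else is sign bookkeeping, using $(M - I)^{-1} = -(I - M)^{-1}$ and the scalar identity $\tfrac{1}{\mathbf{a} - 1} = -\tfrac{1}{1 - \mathbf{a}}$.

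First I would rewrite \eqref{eqn_A_matrix_bounds} by subtracting each term from $I_N$. Since $\preceq$ is preserved under $X \mapsto I_N - X$ (negating and adding a common matrix), \eqref{eqn_A_matrix_bounds} becomes
\begin{align*}
(1 - \overline{\mathbf{a}}) I_N \preceq I - \overline{A} \preceq I - A_p \preceq I - \underline{A} \preceq (1 - \underline{\mathbf{a}}) I_N .
\end{align*}
Because every $a_i \in (0,1)$ (Assumption \ref{assume_1}), we have $0 < 1 - \overline{\mathbf{a}} \le 1 - \underline{\mathbf{a}}$, so each matrix in this chain is positive definite and in particular invertible; this is the only ``technical'' point and it is immediate. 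Applying the order-reversing property of inversion to each consecutive pair yields
\begin{align*}
\tfrac{1}{1 - \underline{\mathbf{a}}} I_N \preceq (I - \underline{A})^{-1} \preceq (I - A_p)^{-1} \preceq (I - \overline{A})^{-1} \preceq \tfrac{1}{1 - \overline{\mathbf{a}}} I_N ,
\end{align*}
which is exactly \eqref{eqn_matrix_bounds_2}.

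For \eqref{eqn_matrix_bounds_1}, I would simply negate the chain just obtained. Since $(\underline{A} - I)^{-1} = -(I - \underline{A})^{-1}$, and likewise for $A_p$, $\overline{A}$, and the scalar endpoints, multiplying the whole chain by $-1$ reverses every $\preceq$ into $\succeq$ and gives
\begin{align*}
\tfrac{1}{\underline{\mathbf{a}} - 1} I_N \succeq (\underline{A} - I)^{-1} \succeq (A_p - I)^{-1} \succeq (\overline{A} - I)^{-1} \succeq \tfrac{1}{\overline{\mathbf{a}} - 1} I_N ,
\end{align*}
i.e. \eqref{eqn_matrix_bounds_1}. There is essentially no obstacle here; the only place to be careful is the sign flip and the verification that the subtracted matrices are positive definite so that the monotonicity lemma legitimately applies. (One could alternatively note that $\underline{A}$, $A_p$, $\overline{A}$ are diagonal and argue entrywise via monotonicity of $t \mapsto 1/(1-t)$ on $(0,1)$, but the operator-antitonicity argument is cleaner and needs no structural assumption.)
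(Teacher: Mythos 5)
Your proposal is correct and takes essentially the same route as the paper: rewrite the chain \eqref{eqn_A_matrix_bounds} by an affine shift and then apply the order-reversing property of matrix inversion, noting positive definiteness since every $a_i \in (0,1)$. The only cosmetic difference is that you obtain \eqref{eqn_matrix_bounds_1} by negating \eqref{eqn_matrix_bounds_2} via $(M-I)^{-1} = -(I-M)^{-1}$, whereas the paper inverts the (negative definite) chain $\underline{A}-I \preceq A_{p}-I \preceq \overline{A}-I$ directly; both are valid.
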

\begin{proof}
From \eqref{eqn_A_matrix_bounds}, we see that
\begin{align*}
&(\underline{\mathbf{a}} - 1) I_{N} \preceq \underline{A} - I \preceq A_{p} - I \preceq \overline{A} - I \preceq (\overline{\mathbf{a}} - 1) I_{N}. \\
&(1-\underline{\mathbf{a}}) I_{N} \succeq I - \underline{A} \succeq I - A_{p} \succeq I - \overline{A} \succeq (1 - \overline{\mathbf{a}}) I_{N}.
\end{align*}
Then, by taking the inverse of both, we get \eqref{eqn_matrix_bounds_1} and \eqref{eqn_matrix_bounds_2}.
\end{proof}

\section{Distributed Minimax Adaptive Controller With Lower \& Upper $\ell_{2}$ Gain Bounds} \label{sec:analysis}
With the uncertainty given by \eqref{eqn_node_uncertainty}, the corresponding $\ell_{2}$ gain of the optimal distributed minimax adaptive controller namely $\gamma^{\dagger}$ is not yet known. Note that Problem \ref{problem_1} has a finite solution if and only if $\gamma \geq \gamma^{\dagger}$. First, we present the proposed distributed minimax adaptive control algorithm followed by lower and upper bounds for the $\ell_{2}$ gain $\gamma^{\dagger}$ corresponding to the proposed controller solving problem \ref{problem_1}.

\subsection{Design strategy for a sub-optimal distributed minimax adaptive control algorithm}
We describe a sub-optimal distributed minimax adaptive control algorithm given the set $\mathbf{A_{i}}$ for every node $i \in \mathcal{V}$. it is possible to establish an one-to-one correspondence between Problem \ref{problem_1} and the centralised minimax adaptive control problem setting given in \cite{rantzer2021minimax}. However, as the number of nodes $N$ grow, computing a centralized controller becomes expensive. To hedge against the uncertainty in the system matrix $A$, one can consider collecting historical data. Given the large-scale network setting, the highlighting point of our approach is that it is not necessary for a node $i \in \mathcal{V}$ to collect the history for the entire network. Each node $i \in \mathcal{V}$ in the network has access to only local information from its neighbors at a time. Hence, the process of hedging against the uncertainty prevailing over its dynamics $a_{i} \in \mathbf{A_{i}}$ involves collecting historical data from only its local neighbors $\mathcal{N}_{i}$ to arrive at its own control action by finding a model that best describes the disturbance trajectory up to that time. Note that the disturbance at time $t \in \mathbb{N}$ can be inferred from \eqref{eqn_node_dynamics} as
\begin{align} \label{eqn_node_disturbance_dynamics}
    w_{i}(t) = a_{i} x_{i}(t) + b \sum_{j \in \mathcal{N}_{i}} (u_{i}(t) - u_{j}(t)) - x_{i}(t+1).
\end{align}
Let us denote the control vector from node $i \in \mathcal{V}$ corresponding to all its neighbors $\mathcal{N}_{i}$ as
\begin{align}
u_{\mathcal{N}_{i}}(t)
=
\begin{bmatrix}
u_{i}(t) - u_{j_{1}}(t) \\ u_{i}(t) - u_{j_{2}}(t) \\ \vdots \\ u_{i}(t) - u_{j_{d_{i}}}(t)     
\end{bmatrix}.
\end{align}
Note that for the distributed control policy to fit the description given in \eqref{eqn_control_policy}, every node $i \in \mathcal{V}$ collects only the local neighbor information data in the form of sample covariance matrix with $t \in \mathbb{N}$ and $Z^{(i)}(0) = 0$ as
\begin{subequations}
\label{eqn_dmac_design_strategy}
\begin{align}
    x_{i}(t+1) &= v_{i}(t), \\
    Z^{(i)}(t+1) &= Z^{(i)}(t) + \begin{bmatrix} -v_{i}(t) \\ x_{i}(t) \\ u_{\mathcal{N}_{i}}(t) \end{bmatrix} \begin{bmatrix} -v_{i}(t) \\ x_{i}(t) \\ u_{\mathcal{N}_{i}}(t) \end{bmatrix}^{\top}.
\end{align}
\end{subequations}
Note that with the above construction, 
\begin{align} \label{eqn_dist_trajectory}
\norm{\begin{bmatrix} 1 & a_{i} & b \mathbf{1}^{\top}_{d_{i}}    \end{bmatrix}^{\top}}^{2}_{Z^{(i)}(t)}   
= 
\sum^{t-1}_{\tau=0} \abs{w_{i}(\tau)}^{2}.
\end{align}
Our approach is to select a model at every point in time that best describes the disturbance trajectory given by \eqref{eqn_dist_trajectory}. We do \emph{not claim} here that our distributed controller design strategy based on the disturbance estimation using historical data is an optimal one. 

\subsection{Lower bound for $\ell_{2}$ gain}
It is known that the distributed minimax adaptive controller can never do better than the distributed $\mathcal{H}_{\infty}$ optimal controller as the latter operates with the knowledge of the true system dynamics. This implies that we can obtain lower bound for the $\ell_{2}$ gain corresponding to the distributed minimax adaptive controller using the associated $\ell_{2}$ gain of the distributed $\mathcal{H}_{\infty}$ optimal controller. The following lemma formally establishes this fact. 

\begin{lemma} \label{lemma_l2_gain_lower_bound}
Given the uncertainty set $\mathbf{A_{i}}$ for every node $i \in \mathcal{V}$ described by \eqref{eqn_node_uncertainty}, 
\begin{align} \label{eqn_gamma_dagger_lower_bound}
    \gamma^{\dagger} \geq \underbrace{\norm{\left((\overline{A}-I)^{2} + BB^{\top} \right)^{-1}}^{\frac{1}{2}}}_{:= \underline{\gamma}^{\dagger}}.
\end{align}
\end{lemma}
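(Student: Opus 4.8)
The plan is to build on the elementary principle --- already flagged in the text --- that the distributed minimax adaptive controller, which must hedge against the entire family $\{A_{p}\}_{p \in \llbracket F \rrbracket}$, can never do better than the distributed $\mathcal{H}_{\infty}$-optimal controller that is handed the true realization. The key observation is that, by Assumption~\ref{assume_2}, the per-node maxima $\overline{a}_{i}$ can be selected independently, so the diagonal matrix $\overline{A} = \mathrm{diag}(\overline{a}_{1},\dots,\overline{a}_{N})$ is itself one of the $F = M^{N}$ admissible realizations; moreover every $\overline{a}_{i} \in \mathbf{A_{i}}$ obeys \eqref{eqn_communication_condition}, hence $\overline{A}$ satisfies \eqref{eqn_compact_communication_constraint}. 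If the adversary commits to $A = \overline{A}$, the minimax policy is just one particular admissible distributed controller for the \emph{known} LTI plant $x_{t+1} = \overline{A}x_{t} + Bu_{t} + w_{t}$, and therefore $\gamma^{\dagger} \ge \gamma^{\star}(\overline{A})$, where $\gamma^{\star}(\overline{A})$ denotes the optimal distributed $\mathcal{H}_{\infty}$ $\ell_{2}$ gain of that plant; by \cite{lidstrom2017h}, since \eqref{eqn_compact_communication_constraint} holds for $\overline{A}$, this optimum is attained by $K = B^{\top}(\overline{A}-I)^{-1}$.

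It then remains to lower-bound the $\ell_{2}$ gain of the closed loop $A_{\mathrm{cl}} = \overline{A} + BK = \overline{A} + BB^{\top}(\overline{A}-I)^{-1}$ from the disturbance $w$ to the penalized signal $z = \begin{bmatrix} x \\ u \end{bmatrix}$. Factoring out $(\overline{A}-I)^{-1}$ on the right yields
\begin{align*}
I - A_{\mathrm{cl}} &= -\big( (\overline{A}-I)^{2} + BB^{\top} \big)(\overline{A}-I)^{-1}, \\
I + K^{\top}K &= (\overline{A}-I)^{-1}\big( (\overline{A}-I)^{2} + BB^{\top} \big)(\overline{A}-I)^{-1},
\end{align*}
and multiplying these three factors gives the clean cancellation
\begin{align*}
(I - A_{\mathrm{cl}})^{-\top}\,(I + K^{\top}K)\,(I - A_{\mathrm{cl}})^{-1} = \big( (\overline{A}-I)^{2} + BB^{\top} \big)^{-1}.
\end{align*}
The left-hand side equals $T_{zw}(1)^{\top}T_{zw}(1)$ for the closed-loop transfer matrix $T_{zw}(\zeta) = \begin{bmatrix} I \\ K \end{bmatrix}(\zeta I - A_{\mathrm{cl}})^{-1}$ evaluated at $\zeta = 1$, i.e., its static (DC) gain. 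Since $A_{\mathrm{cl}}$ is Schur (again by \eqref{eqn_compact_communication_constraint} for $\overline{A}$), the $\ell_{2}$ gain is $\norm{T_{zw}}_{\mathcal{H}_{\infty}} \ge \norm{T_{zw}(1)} = \norm{\big( (\overline{A}-I)^{2} + BB^{\top} \big)^{-1}}^{1/2}$; as $(\overline{A}-I)^{2} + BB^{\top} \succ 0$ (because $\overline{A}-I$ is invertible), the right-hand side is exactly $\underline{\gamma}^{\dagger}$. Chaining with $\gamma^{\dagger} \ge \gamma^{\star}(\overline{A})$ closes the argument.

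The step I expect to be the main obstacle is the very first one: making rigorous that no causal, possibly time-varying, nonlinear, \emph{adaptive} distributed policy can beat the static distributed feedback $K$ once the plant is genuinely $\overline{A}$ --- that is, that adaptation buys nothing on a plant that is actually known and LTI, and that $K$ is optimal within the admissible information structure \eqref{eqn_control_policy}. This rests on the optimality assertion of \cite{lidstrom2017h} together with the classical fact that dynamic or nonlinear control cannot reduce the $\ell_{2}$ gain of a linear plant under full state information. Everything after that is routine linear algebra; alternatively one can avoid the transfer-function language altogether by driving the known plant with an asymptotically constant disturbance and comparing the time-averaged running cost, which reproduces the same $\big( (\overline{A}-I)^{2} + BB^{\top} \big)^{-1}$ bound directly.
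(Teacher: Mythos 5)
Your argument is correct and takes essentially the same route as the paper: lower-bound $\gamma^{\dagger}$ by the omniscient distributed $\mathcal{H}_{\infty}$ gain on the admissible worst-case realization $A=\overline{A}$ (admissible by Assumption \ref{assume_2}) and invoke Theorem 1 of \cite{lidstrom2017h}. The only difference is that the paper quotes the closed-form optimal gain \eqref{eqn_Hinfty_l2_gain} directly, whereas you re-derive the bound yourself via the DC-gain identity $T_{zw}(1)^{\top}T_{zw}(1)=\left((\overline{A}-I)^{2}+BB^{\top}\right)^{-1}$, which makes the same step slightly more self-contained.
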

\begin{proof}
Note that given a known $(A,B)$ pair, Theorem 1 of \cite{lidstrom2017h} gives an explicit expression for the controller gain matrix and the corresponding $\ell_{2}$ gain achieved by the controller from the disturbance to the error. That is, given $(A, B)$ matrices, the respective $\ell_{2}$ gain $\gamma^{\star}$ is achieved by the controller\footnote{The corresponding control input for each node $i \in \mathcal{V}$ is $u^{\star}_{i}(t) = \frac{b x_{i}(t)}{a_{i} - 1}$.} given by \eqref{eqn_opt_distributed_Hinf_control} and $\gamma^{\star}$ is given by 
\begin{align} \label{eqn_Hinfty_l2_gain}
\gamma^{\star} := \norm{\left((A - I)^{2} + BB^{\top}\right)^{-1}}^{\frac{1}{2}}.    
\end{align}
Since $A$ is diagonal and Schur stable, we observe that the lower bound for the $\ell_{2}$ gain $\gamma^{\dagger}$ is achieved \footnote{One can substitute $A = \overline{\mathbf{a}} I_{N}$ too to get a lower bound.} when $A  = \overline{A}$ and hence the result follows. 
\end{proof}
\begin{lemma} \label{lemma_Riccati_Aide}
Given the condition \eqref{eqn_compact_communication_constraint} $\forall p \in \left \llbracket F \right \rrbracket$, we can infer from Lemma \ref{lemma_l2_gain_lower_bound} that
\begin{align} \label{eqn_lemma_condition}
    (I - A_{p}) \succ \gamma^{-2} I.
\end{align}
\end{lemma}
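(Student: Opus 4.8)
The plan is to combine the algebraic content of the communication constraint \eqref{eqn_compact_communication_constraint} with the lower bound $\gamma \ge \gamma^{\dagger} \ge \underline{\gamma}^{\dagger}$ furnished by Lemma \ref{lemma_l2_gain_lower_bound}, using throughout that every $A_{p}$ is diagonal and that $\overline{A}$ is itself one of the realisations (Assumption \ref{assume_2}).

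First I would instantiate \eqref{eqn_compact_communication_constraint} at $A = \overline{A} = \mathrm{diag}(\overline{a}_{1},\dots,\overline{a}_{N})$, which is admissible precisely because it is one of the $F = M^{N}$ possible system matrices, obtaining $\overline{A}^{2} + BB^{\top} \prec \overline{A}$. Completing the square then gives $(\overline{A}-I)^{2} + BB^{\top} = (\overline{A}^{2} + BB^{\top}) - 2\overline{A} + I \prec \overline{A} - 2\overline{A} + I = I - \overline{A}$. Since $A_{p} \preceq \overline{A}$ for every $p \in \llbracket F \rrbracket$ by \eqref{eqn_A_matrix_bounds}, we have $I - \overline{A} \preceq I - A_{p}$, and therefore $(\overline{A}-I)^{2} + BB^{\top} \prec I - A_{p}$ for all $p$; the strict sign survives because a strict $\prec$ followed by a non-strict $\preceq$ is still strict.

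Next I would turn the gain bound into a matrix inequality. The matrix $(\overline{A}-I)^{2} + BB^{\top}$ is symmetric and positive definite, since $(\overline{A}-I)^{2}$ is diagonal with strictly positive entries ($\overline{\mathbf{a}} < 1$) and $BB^{\top} \succeq 0$; hence, reading $\norm{\cdot}$ as the spectral norm, $(\underline{\gamma}^{\dagger})^{-2} = \norm{\big((\overline{A}-I)^{2} + BB^{\top}\big)^{-1}}^{-1} = \lambda_{\min}\!\big((\overline{A}-I)^{2} + BB^{\top}\big)$, which gives $(\overline{A}-I)^{2} + BB^{\top} \succeq (\underline{\gamma}^{\dagger})^{-2} I$. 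Because a feasible $\gamma$ must satisfy $\gamma \ge \gamma^{\dagger} \ge \underline{\gamma}^{\dagger}$ (Lemma \ref{lemma_l2_gain_lower_bound}), we obtain $\gamma^{-2} I \preceq (\underline{\gamma}^{\dagger})^{-2} I \preceq (\overline{A}-I)^{2} + BB^{\top}$.

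Concatenating the two chains yields $\gamma^{-2} I \preceq (\overline{A}-I)^{2} + BB^{\top} \prec I - A_{p}$, which is exactly \eqref{eqn_lemma_condition}. The only points needing care are (i) checking that $\overline{A}$ is a genuine realisation so that \eqref{eqn_compact_communication_constraint} applies there, (ii) preserving the strict inequality through the composition above, and (iii) identifying $\norm{\cdot}$ with the spectral norm so that $\norm{M^{-1}} = 1/\lambda_{\min}(M)$ for $M \succ 0$. Beyond this bookkeeping I do not foresee a substantive obstacle.
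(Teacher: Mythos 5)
Your proof is correct and follows essentially the same route as the paper: both arguments rewrite the communication constraint \eqref{eqn_compact_communication_constraint} as $(A-I)^{2} + BB^{\top} \prec I - A$ (the paper does this through the identity $(I-A_{p})A_{p} + (I-A_{p})^{2} = I-A_{p}$, which is exactly your completion of the square) and combine it with Lemma \ref{lemma_l2_gain_lower_bound} read as the matrix inequality $(\cdot) \succeq \gamma^{-2}I$. The only organizational difference is that you apply both facts at $\overline{A}$ and transfer to each $A_{p}$ at the end via $A_{p} \preceq \overline{A}$, whereas the paper asserts $(A_{p}-I)^{2} + BB^{\top} \succeq \gamma^{-2}I$ directly for every $p$ (leaving the monotonicity $(A_{p}-I)^{2} \succeq (\overline{A}-I)^{2}$ implicit), so your arrangement actually makes that hidden step explicit.
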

\begin{proof}
Notice $\forall p \in \left \llbracket F \right \rrbracket$, \eqref{eqn_gamma_dagger_lower_bound} implies that
\begin{subequations}
\begin{align} 
    &(A_{p} - I)^{2} + BB^{\top} \succeq \gamma^{-2} I \\
    \iff &BB^{\top} \succeq \gamma^{-2} I - (A_{p} - I)^{2} \label{eqn_lemma_assumption}
\end{align}
\end{subequations}
Then $\forall p \in \left \llbracket F \right \rrbracket$, rewriting \eqref{eqn_compact_communication_constraint}, we get
\begin{align} \label{eqn_lemma_intermediate_step1}
(I - A_{p}) A_{p} - B B^{\top} \succ 0.
\end{align}
Using \eqref{eqn_lemma_assumption} in \eqref{eqn_lemma_intermediate_step1}, we get,
\begin{align*}
&(I - A_{p}) A_{p} + (I - A_{p})^{2} - \gamma^{-2} I \succ 0, \\
\iff
&(I - A_{p}) (A_{p} + I - A_{p}) - \gamma^{-2} I \succ 0
\\
\iff
&(I - A_{p}) \succ \gamma^{-2} I.
\end{align*}
\end{proof}
A solution satisfying the $\mathcal{H}_{\infty}$ Riccati inequality corresponding to the controller \eqref{eqn_opt_distributed_Hinf_control} is shown below.

\begin{lemma} \label{lemma_Hinf_riccati_ineq_solution}
Let $\gamma > 0$. Then, $\forall p \in \left \llbracket F \right \rrbracket$, the matrix $P_{p} = (I - A_{p})^{-1}$ associated with the controller gain $K_{p} := B^{\top} \left(A_p - I \right)^{-1}$ satisfies the $\mathcal{H}_{\infty}$ Riccati inequality 
\begin{align} \label{eqn_H_inf_riccati_inequality}
    P_{p} &\succeq I + K^{\top}_{p} \, K_{p} \nonumber \\
          &+ (A_{p} + B K_{p})^{\top} \left( P^{-1}_{p} - \gamma^{-2} I \right)^{-1} (A_{p} + B K_{p}).
\end{align}
\end{lemma}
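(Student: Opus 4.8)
The plan is to convert the Riccati inequality \eqref{eqn_H_inf_riccati_inequality} into a single linear matrix inequality, simplify that inequality by a congruence transformation, and reduce it to a condition that has already been recorded in the proof of Lemma~\ref{lemma_Riccati_Aide}.

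First I would collect the elementary identities implied by $P_p = (I-A_p)^{-1}$ and $K_p = B^\top(A_p - I)^{-1}$: namely $P_p^{-1} = I - A_p$, $K_p = -B^\top P_p$ (so that $K_p^\top K_p = P_p BB^\top P_p$), and the closed-loop matrix $A_p + BK_p = A_p - BB^\top P_p$. By Lemma~\ref{lemma_Riccati_Aide}, $P_p^{-1} - \gamma^{-2}I = (I - A_p) - \gamma^{-2} I \succ 0$; in particular it is invertible, so a Schur complement shows that \eqref{eqn_H_inf_riccati_inequality} is equivalent to
\[
\begin{bmatrix} P_p - I - K_p^\top K_p & (A_p + BK_p)^\top \\ A_p + BK_p & P_p^{-1} - \gamma^{-2}I \end{bmatrix} \succeq 0.
\]

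Next I would apply the congruence transformation by $\mathrm{diag}(P_p^{-1}, I)$, which is symmetric and invertible and hence preserves positive semidefiniteness. Using the identities above together with $P_p^{-1} - P_p^{-2} = (I-A_p) - (I-A_p)^2 = A_p - A_p^2$, a short computation shows that the $(1,1)$, $(1,2)$ and $(2,1)$ blocks all become the common matrix $C := A_p - A_p^2 - BB^\top$, while the $(2,2)$ block becomes $R := (I-A_p) - \gamma^{-2}I$; i.e.\ the transformed matrix is
\[
\begin{bmatrix} C & C \\ C & R \end{bmatrix}, \qquad C := A_p - A_p^2 - BB^\top, \qquad R := (I-A_p) - \gamma^{-2}I .
\]
The communication condition \eqref{eqn_compact_communication_constraint} gives $C \succ 0$ and Lemma~\ref{lemma_Riccati_Aide} gives $R \succ 0$, so a second Schur complement (now with respect to the block $C$) reduces the claim to $R - C \succeq 0$. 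Since $R - C = (A_p - I)^2 + BB^\top - \gamma^{-2} I$, this is exactly inequality \eqref{eqn_lemma_assumption} from the proof of Lemma~\ref{lemma_Riccati_Aide}, which is a consequence of the lower bound \eqref{eqn_gamma_dagger_lower_bound}; this finishes the argument.

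I expect the computations to be routine; the one point requiring care --- the \emph{main obstacle} --- is organizational: one must first secure invertibility of $P_p^{-1} - \gamma^{-2}I$ (via Lemma~\ref{lemma_Riccati_Aide}) so the Schur-complement manipulation is legitimate, and then choose to conjugate by $P_p^{-1}$ at exactly the right moment, since it is precisely this step that collapses the three blocks to the single matrix $C$ and lays bare the equivalence with the known $\ell_2$-gain lower-bound condition. Equivalently, one may argue via dissipativity, taking $V(x) = x^\top P_p x$ as a storage function for the closed loop $x^+ = (A_p + BK_p)x + w$ and maximizing $V(x^+) - \gamma^2\abs{w}^2 + \abs{x}^2 + \abs{u}^2 - V(x)$ over $w$ --- legitimate because $\gamma^2 I - P_p \succ 0$ by Lemma~\ref{lemma_Riccati_Aide} --- which leads to the same reduction.
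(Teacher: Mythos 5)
Your argument is correct and is essentially the paper's own proof run in reverse: the paper starts from the positive semidefinite block matrix whose four blocks all equal $A_p - A_p^2 - BB^\top$, enlarges its $(1,1)$ block using \eqref{eqn_lemma_assumption}, conjugates by $\mathrm{diag}\bigl(I,(A_p-I)^{-1}\bigr)$ and applies one Schur complement, whereas you Schur-complement the Riccati inequality, conjugate by $\mathrm{diag}\bigl(P_p^{-1},I\bigr)$ and reduce to the same condition $R - C = (A_p-I)^2 + BB^\top - \gamma^{-2}I \succeq 0$. Both versions rest on exactly the same ingredients --- the structural condition \eqref{eqn_compact_communication_constraint}, Lemma \ref{lemma_Riccati_Aide} for positivity of $P_p^{-1}-\gamma^{-2}I$, and \eqref{eqn_lemma_assumption} --- so your proof matches the paper's.
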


\begin{proof}
Given \eqref{eqn_compact_communication_constraint}, we see that 
\begin{align*}
    0 &\preceq
    \begin{bmatrix}
        A_{p} - A^{2}_{p} - BB^{\top} & A_{p} - A^{2}_{p} - BB^{\top} \\
        A_{p} - A^{2}_{p} - BB^{\top} & A_{p} - A^{2}_{p} - BB^{\top}
    \end{bmatrix} \\
    &\preceq
    \begin{bmatrix}
        I - A_{p} - \gamma^{-2} I & A_{p} - A^{2}_{p} - BB^{\top} \\
        A_{p} - A^{2}_{p} - BB^{\top} & A_{p} - A^{2}_{p} - BB^{\top}
    \end{bmatrix},     
\end{align*}    
where we used Lemma \ref{lemma_Riccati_Aide} to get the second inequality. Now, multiplying the above inequality from left and right by $\begin{bmatrix} I & 0 \\ 0 & (A_{p} - I)^{-1} \end{bmatrix}$, and substituting $K_{p} := B^{\top} \left(A_p - I \right)^{-1}, P_{p} = (I - A_{p})^{-1}$, we get
\begin{align*}
\begin{bmatrix}
    P^{-1}_{p} - \gamma^{-2} I & -(A_{p} + BK_{p})^{\top} \\
    -(A_{p}+ BK_{p}) & P_{p} - I - K^{\top}_{p} K_{p}
\end{bmatrix} \succeq 0.
\end{align*}    
The result \eqref{eqn_H_inf_riccati_inequality} follows by applying Schur complement.
\end{proof}

\begin{theorem} \label{theorem_1}
When $A = \overline{\mathbf{a}}I_{N}$ in \eqref{eqn_compact_dynamics}, Lemma \ref{lemma_Hinf_riccati_ineq_solution} implies that
\begin{align} \label{eqn_hinf_gamma_bound}
    \gamma \geq \norm{\mathbf{G}_{\infty} ((1 - \overline{\mathbf{a}})\mathbf{G}_{\infty} - \mathbf{F}_{\infty})^{-1}}^{\frac{1}{2}}, 
\end{align}
where $\mathbf{G}_{\infty} = \frac{1}{(1 - \overline{\mathbf{a}})^{2}} ((1 - \overline{\mathbf{a}})\overline{\mathbf{a}}I - BB^{\top})$, and $\mathbf{F}_{\infty} = \left( \overline{\mathbf{a}} I + \frac{1}{\overline{\mathbf{a}} - 1} BB^{\top} \right)^{\top} \left( \overline{\mathbf{a}} I + \frac{1}{\overline{\mathbf{a}} - 1} BB^{\top} \right)$.
\end{theorem}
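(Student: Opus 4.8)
The plan is to specialise the $\mathcal{H}_\infty$ Riccati inequality of Lemma \ref{lemma_Hinf_riccati_ineq_solution} to the scalar choice $A_p = \overline{\mathbf{a}} I_N$ and then convert the resulting matrix inequality into the scalar bound \eqref{eqn_hinf_gamma_bound} by a congruence transformation.

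First I would substitute $A_p = \overline{\mathbf{a}} I_N$ into \eqref{eqn_H_inf_riccati_inequality}. Then $P_p = (I-\overline{\mathbf{a}} I_N)^{-1} = \tfrac{1}{1-\overline{\mathbf{a}}}I_N$ and $K_p = \tfrac{1}{\overline{\mathbf{a}}-1}B^\top$, so that $K_p^\top K_p = \tfrac{1}{(1-\overline{\mathbf{a}})^2}BB^\top$, while $A_p + BK_p = \overline{\mathbf{a}} I + \tfrac{1}{\overline{\mathbf{a}}-1}BB^\top$ is symmetric and hence $(A_p+BK_p)^\top(A_p+BK_p) = \mathbf{F}_\infty$. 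Moreover $P_p^{-1} - \gamma^{-2}I = ((1-\overline{\mathbf{a}}) - \gamma^{-2})I$, which is positive definite by Lemma \ref{lemma_Riccati_Aide} read off at $A_p = \overline{\mathbf{a}} I_N$, so its inverse is $\tfrac{1}{(1-\overline{\mathbf{a}})-\gamma^{-2}}I$. Plugging these in, moving the identity and the $BB^\top$ term to the left, and using $\tfrac{1}{1-\overline{\mathbf{a}}}-1 = \tfrac{\overline{\mathbf{a}}}{1-\overline{\mathbf{a}}}$ together with factoring out $(1-\overline{\mathbf{a}})^{-2}$, the left-hand side collapses to exactly $\mathbf{G}_\infty$, giving $\mathbf{G}_\infty \succeq \tfrac{1}{(1-\overline{\mathbf{a}})-\gamma^{-2}}\mathbf{F}_\infty$. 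This is the purely computational step.

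Next I would multiply both sides by the positive scalar $(1-\overline{\mathbf{a}}) - \gamma^{-2}$ and rearrange to get $\mathbf{H}_\infty := (1-\overline{\mathbf{a}})\mathbf{G}_\infty - \mathbf{F}_\infty \succeq \gamma^{-2}\mathbf{G}_\infty$. Provided $\mathbf{G}_\infty \succ 0$ — which is precisely the compact communication condition \eqref{eqn_compact_communication_constraint} evaluated at $A = \overline{\mathbf{a}} I_N$, since $\mathbf{G}_\infty \succ 0 \Leftrightarrow BB^\top \prec \overline{\mathbf{a}}(1-\overline{\mathbf{a}})I_N \Leftrightarrow \overline{\mathbf{a}}^2 I + BB^\top \prec \overline{\mathbf{a}} I$ — this also forces $\mathbf{H}_\infty \succ 0$. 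Congruence by $\mathbf{H}_\infty^{-1/2}$ then yields $\gamma^2 I \succeq \mathbf{H}_\infty^{-1/2}\mathbf{G}_\infty\mathbf{H}_\infty^{-1/2}$, and taking the largest eigenvalue of both sides, together with the fact that $\mathbf{H}_\infty^{-1/2}\mathbf{G}_\infty\mathbf{H}_\infty^{-1/2}$ is symmetric positive definite and similar to $\mathbf{G}_\infty\mathbf{H}_\infty^{-1}$ (hence shares its positive real spectrum), gives $\gamma^2 \geq \norm{\mathbf{G}_\infty\mathbf{H}_\infty^{-1}} = \norm{\mathbf{G}_\infty((1-\overline{\mathbf{a}})\mathbf{G}_\infty - \mathbf{F}_\infty)^{-1}}$, i.e. \eqref{eqn_hinf_gamma_bound}.

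The main obstacle is the positive-definiteness bookkeeping: one must justify, at the point of use, that $(1-\overline{\mathbf{a}})-\gamma^{-2} > 0$ (Lemma \ref{lemma_Riccati_Aide} at $\overline{\mathbf{a}} I_N$) and that $\mathbf{G}_\infty \succ 0$, hence $\mathbf{H}_\infty \succ 0$, so that all the inverses and square roots appearing in the statement and proof are well defined; the admissibility of substituting $A = \overline{\mathbf{a}} I_N$ into Lemma \ref{lemma_Hinf_riccati_ineq_solution} should be invoked in the same spirit as the remark in the proof of Lemma \ref{lemma_l2_gain_lower_bound}. A secondary subtlety is the identification of $\norm{\mathbf{G}_\infty\mathbf{H}_\infty^{-1}}$ with $\lambda_{\max}(\mathbf{G}_\infty\mathbf{H}_\infty^{-1})$, which should be stated explicitly since $\mathbf{G}_\infty\mathbf{H}_\infty^{-1}$ is not itself symmetric but is similar to the symmetric positive definite matrix $\mathbf{H}_\infty^{-1/2}\mathbf{G}_\infty\mathbf{H}_\infty^{-1/2}$, consistently with how $\norm{\cdot}$ is used for the analogous object in \eqref{eqn_Hinfty_l2_gain}.
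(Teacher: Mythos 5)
Your proposal is correct and is essentially the paper's own (very terse) argument made explicit: the paper simply substitutes $P_{p}=\frac{1}{1-\overline{\mathbf{a}}}I$, $K_{p}=\frac{1}{\overline{\mathbf{a}}-1}B^{\top}$ into \eqref{eqn_H_inf_riccati_inequality} and ``solves for $\gamma$,'' which is exactly your computation yielding $\mathbf{G}_{\infty}\succeq\frac{1}{(1-\overline{\mathbf{a}})-\gamma^{-2}}\mathbf{F}_{\infty}$ and then \eqref{eqn_hinf_gamma_bound}, with the positivity bookkeeping via Lemma \ref{lemma_Riccati_Aide} and \eqref{eqn_compact_communication_constraint} handled as you describe. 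Your only (harmless) inaccuracy is the remark that $\mathbf{G}_{\infty}\mathbf{H}_{\infty}^{-1}$ is not symmetric: since $\mathbf{G}_{\infty}$ and $\mathbf{H}_{\infty}$ are both polynomials in $BB^{\top}$ they commute, so the product is in fact symmetric positive semidefinite and the norm--eigenvalue identification is immediate.
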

\begin{proof}
From Lemma \ref{lemma_Hinf_riccati_ineq_solution}, when $A = \overline{\mathbf{a}}I_{N}$, we see that $P_{p} = \frac{1}{1 - \overline{\mathbf{a}}} I_{N}$ and $K_{p} = \frac{1}{\overline{\mathbf{a}} - 1} B^{\top}$. Using these values in \eqref{eqn_H_inf_riccati_inequality} and solving for $\gamma$, we get \eqref{eqn_hinf_gamma_bound} as it is the condition required for the associated dynamic game to have a finite value.   
\end{proof}

\noindent We now present the proposed sub-optimal distributed minimax adaptive control algorithm using the strategy given by \eqref{eqn_dmac_design_strategy} and also give an upper bound $\overline{\gamma}^{\dagger}$ for the $\ell_{2}$ gain. That is, our formulation provides a family of (sub-optimal) distributed minimax control policy parameterized by $\gamma$, which are guaranteed to exist $\forall \gamma > \overline{\gamma}^{\dagger} \geq \gamma^{\dagger}$. 


\subsection{Distributed implementation for minimax adaptive control algorithm and its $\ell_{2}$ gain upper bound.}
\label{subsec:analysis}

\begin{theorem} \label{theorem_2}
Let the uncertainty set $\mathbf{A_{i}}$ for every node $i \in \mathcal{V}$ be given by \eqref{eqn_node_uncertainty}. Then, 
for every node $i \in \mathcal{V}$, the distributed control policy that solves Problem \ref{problem_1} is of the form
\begin{subequations}
\label{eqn_distributed_minimax_control}
\begin{align} 
    u_{i}(t) &= \frac{b x_{i}(t)}{a^{\dagger}_{i}(t) - 1}, \quad \text{with } \\
    a^{\dagger}_{i}(t) &= \argmin_{a_{i} \in \mathbf{A}_{i}} \norm{\begin{bmatrix}
    1 & a_{i} & b \mathbf{1}^{\top}_{d_{i}}    
    \end{bmatrix}^{\top}}^{2}_{Z^{(i)}(t)}.
\end{align}    
\end{subequations}
Further, the dynamic game associated with the Problem \ref{problem_1} under the control policy given by \eqref{eqn_distributed_minimax_control} will have a finite value (i.e., $J^{\pi}_{i} < \infty$) if $\beta := \gamma^{2}$ satisfies
\begin{subequations}
\label{eqn_gamma_dagger_upper_bound}
\begin{align} 
&f_{1} \beta^{3} + f_{2} \beta^{2} + f_{3} \beta + f_{4} 
\geq 0, \quad \text{where}, \label{eqn_beta_cubic_polynomial}\\
&f_{1} 
=
\frac{(1-\overline{\mathbf{a}})(\overline{\mathbf{a}} - \underline{\mathbf{a}})^{2}}{8}, \\
&f_{2}
= \frac{-2\overline{\mathbf{a}}^{3} +4\overline{\mathbf{a}}^{2} -2\underline{\mathbf{a}}^{2} +4\overline{\mathbf{a}} \underline{\mathbf{a}} - 2\overline{\mathbf{a}} -2\underline{\mathbf{a}} }{4}, \\
&f_{3}
=
\frac{4\overline{\mathbf{a}}^{3} -14\overline{\mathbf{a}}^{2} + 16 \overline{\mathbf{a}} - 4\underline{\mathbf{a}} -18 }{4(1 - \overline{\mathbf{a}})} \\
&f_{4}
=
\frac{-1}{(\overline{\mathbf{a}} - 1)^{2}}.
\end{align}
\end{subequations}
\end{theorem}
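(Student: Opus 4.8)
The plan is to reduce the statement to two independent parts: (i) verify that the control policy \eqref{eqn_distributed_minimax_control} is exactly the distributed minimax adaptive controller obtained by specializing the centralised construction of \cite{rantzer2021minimax} to the structured network \eqref{eqn_node_dynamics}, and (ii) derive the cubic condition \eqref{eqn_beta_cubic_polynomial} as the scalar inequality guaranteeing that the associated value function remains bounded. For part (i), I would start from the observation made just before Lemma \ref{lemma_l2_gain_lower_bound}: for a \emph{known} $a_i$, Theorem 1 of \cite{lidstrom2017h} gives the optimal node input $u_i^\star(t) = b x_i(t)/(a_i - 1)$, which is precisely the first line of \eqref{eqn_distributed_minimax_control} with $a_i$ replaced by $a_i^\dagger(t)$. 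The adaptive layer then picks, at each $t$, the model in $\mathbf{A}_i$ minimising the accumulated disturbance energy; by \eqref{eqn_dist_trajectory} this is exactly the weighted norm $\norm{[1\ a_i\ b\mathbf{1}_{d_i}^\top]^\top}^2_{Z^{(i)}(t)}$, so the $\argmin$ in \eqref{eqn_distributed_minimax_control} is the certainty-equivalent selection of the disturbance-consistent model. I would then invoke the one-to-one correspondence with \cite{rantzer2021minimax} (stated in \S\ref{sec:analysis}) to conclude that this policy is admissible and solves Problem \ref{problem_1}, the point being that the block structure $BB^\top = b^2\mathbf{L}$ together with diagonal, Schur $A$ makes the centralised minimax controller separate across nodes.

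For part (ii) — the cubic bound — the key is Lemma \ref{lemma_Hinf_riccati_ineq_solution} together with the worst-case substitution $A = \overline{\mathbf{a}}I_N$ already used in Theorem \ref{theorem_1}. Under the minimax policy the effective closed-loop "regret" term is governed by the gap between the true model and the selected model; bounding $a_i^\dagger(t) - 1$ over $\mathbf{A}_i$ gives the extremes $\underline{\mathbf{a}}, \overline{\mathbf{a}}$, and the quantity $(\overline{\mathbf{a}} - \underline{\mathbf{a}})^2$ appearing in $f_1$ is exactly the squared spread of the uncertainty set. I would take the Riccati inequality \eqref{eqn_H_inf_riccati_inequality} with $P_p = \frac{1}{1-\overline{\mathbf{a}}}I$, $K_p = \frac{1}{\overline{\mathbf{a}}-1}B^\top$, augment it by a term accounting for the mismatch between $a_i$ and $a_i^\dagger(t)$ (this is where the extra factor beyond Theorem \ref{theorem_1} enters, turning the quadratic-in-$\gamma$ condition into a cubic in $\beta = \gamma^2$), and then evaluate on the scalar eigen-subspace where $BB^\top$ acts as $0$ (the consensus direction) or by diagonalising $\mathbf{L}$. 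Clearing denominators of $(1-\overline{\mathbf{a}})$ and $(\overline{\mathbf{a}}-1)^2$ in the resulting scalar inequality produces a cubic polynomial in $\beta$ whose coefficients I would match termwise against $f_1,\dots,f_4$; the requirement that the associated dynamic game have finite value is precisely $f_1\beta^3 + f_2\beta^2 + f_3\beta + f_4 \ge 0$.

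The main obstacle I anticipate is bookkeeping in part (ii): correctly isolating the mismatch term so that the Riccati inequality degrades in a controlled way, and then tracking how the $BB^\top$ contribution enters. Since the claimed $f_2, f_3, f_4$ do \emph{not} contain $BB^\top$ at all, the proof must be evaluating the matrix inequality on the subspace where $BB^\top$ vanishes (or, equivalently, the worst case over the Laplacian spectrum is attained at the zero eigenvalue) — I would need to justify that this is indeed the binding direction, presumably because $(I-A_p)A_p - BB^\top \succ 0$ from \eqref{eqn_lemma_intermediate_step1} means adding $BB^\top$ only helps. Once that reduction is in place, the remaining work is the routine but error-prone algebra of expanding $\frac{\mathbf{G}_\infty}{(1-\overline{\mathbf{a}})\mathbf{G}_\infty - \mathbf{F}_\infty}$-type expressions from Theorem \ref{theorem_1}, inserting the $(\overline{\mathbf{a}} - \underline{\mathbf{a}})^2$ correction, and collecting powers of $\beta$; I would double-check the coefficients by testing the degenerate case $\underline{\mathbf{a}} = \overline{\mathbf{a}}$ (so $f_1 = 0$), where \eqref{eqn_beta_cubic_polynomial} must collapse to the bound of Theorem \ref{theorem_1}.
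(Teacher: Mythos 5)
Your part (i) matches the paper's reasoning: the policy is the certainty-equivalent distributed $\mathcal{H}_{\infty}$ law applied to the model that minimises the accumulated disturbance energy \eqref{eqn_dist_trajectory}, justified by the correspondence with the centralised construction of \cite{rantzer2021minimax}. The gap is in part (ii). The paper does not obtain the cubic by ``augmenting'' the single Riccati inequality \eqref{eqn_H_inf_riccati_inequality} with a mismatch term; it invokes the \emph{coupled} family of inequalities \eqref{eqn_Anders_LMI} from Theorem 3 of \cite{rantzer2021minimax}, which requires exhibiting cross-term value-function matrices $P_{kl}$ for $k \neq l$ in addition to the diagonal ones $P_{kk} = (I - A_k)^{-1}$. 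The concrete choice $P_{kl} = P_{lk} = \tfrac{2}{1-\overline{\mathbf{a}}} I$ (verified to dominate every $P_{kk}$ and to satisfy $P_{kl} \prec \gamma^2 I$), together with $A_k = \overline{\mathbf{a}} I$, $A_l = \underline{\mathbf{a}} I$, is what produces the specific inequality \eqref{thm1_proof_interstep_1}: the average term $\abs{(A^{cl}_{kp}+A^{cl}_{lp})x/2}^2_{(P_{kl}^{-1}-\gamma^{-2}I)^{-1}}$ yields the $\mathbf{F}$ block with the denominator $\tfrac{1-\overline{\mathbf{a}}}{2} - \gamma^{-2}$, and the difference term $-\gamma^2\abs{(A^{cl}_{kp}-A^{cl}_{lp})x/2}^2$ yields $-\tfrac{\gamma^2}{4}(\overline{\mathbf{a}}-\underline{\mathbf{a}})^2 I$. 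Your intuition about where $(\overline{\mathbf{a}}-\underline{\mathbf{a}})^2$ comes from is right, but without specifying the $P_{kl}$ there is no inequality to expand, and the coefficients $f_1,\dots,f_4$ cannot be recovered. Identifying and validating these cross terms is the substantive step of the proof.

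A second, smaller discrepancy: you propose to eliminate $BB^\top$ by restricting to the consensus direction where it vanishes, arguing this is the binding case. The paper instead substitutes the lower bound $BB^\top \succeq \left(\gamma^{-2} - (\overline{\mathbf{a}}-1)^2\right) I$ from \eqref{eqn_lemma_assumption}. This bound is $\gamma$-dependent, and it is precisely this extra $\gamma^{-2}$ entering $\mathbf{F}$ and $\mathbf{G}$ that, after clearing denominators, determines the coefficients $f_2, f_3, f_4$; setting $BB^\top = 0$ would give a different (and not obviously valid, since $BB^\top$ appears with both signs across $\mathbf{F}$ and $\mathbf{G}$) polynomial. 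Your proposed sanity check at $\underline{\mathbf{a}} = \overline{\mathbf{a}}$ is a good idea, but note the paper itself flags that this elimination introduces conservatism, so the degenerate case need not collapse exactly to the bound of Theorem \ref{theorem_1}.
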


\begin{proof}
To ensure that the dynamic game associated with Problem \ref{problem_1} has a finite value, we invoke Theorem 3 from \cite{rantzer2021minimax}. Note that, for all $x \in \mathbb{R}^{N}$, $k,l,p \in \left \llbracket F \right \rrbracket$, except if $k \neq l = p$ and $A^{cl}_{kp} = A_{k} + B K_{p}$, the following linear matrix inequality given in Theorem 3 from \cite{rantzer2021minimax}
\begin{equation} \label{eqn_Anders_LMI}
\begin{aligned}
\footnotesize
\abs{x}^{2}_{P_{lp}} \geq \abs{x}^{2}_{Q} + \abs{K_{p}x}^{2}_{R} &+ \abs{(A^{cl}_{kp} + A^{cl}_{lp})x/2}^{2}_{(P_{kl}^{-1} - \gamma^{-2}I)^{-1}}\\
&- \gamma^{2} \abs{(A^{cl}_{kp} - A^{cl}_{lp})x/2}^{2}
\end{aligned}
\end{equation}
reduces to the standard $\mathcal{H}_{\infty}$ coupled Riccati inequality given in \eqref{eqn_H_inf_riccati_inequality} with respect to the pair $(A_{p}, B)$ when $k = l = p$.
Further, we also know from Lemma \ref{lemma_Hinf_riccati_ineq_solution} that $\forall p \in \left \llbracket F \right \rrbracket, P_{p} = (I - A_{p})^{-1}$ satisfies \eqref{eqn_H_inf_riccati_inequality} for the controller $K_{p} = B^{\top} (A_{p} - I)^{-1}$ given by \eqref{eqn_opt_distributed_Hinf_control}. Now, to arrive at a condition that ensures $J^{\pi}_{i} < \infty$, we let $Q = I_{N}, R = I_{E}$ and assume $A_{p} = A_{k} = \overline{\mathbf{a}} I_{N}, A_{l} = \underline{\mathbf{a}} I_{N}$, $P_{kk} = \left(I - A_{k} \right)^{-1}$ and $0 \prec P_{kl} = P_{lk} = \frac{2}{1 - \overline{\mathbf{a}}} I \prec \gamma^{2}I$ for $k,l,p \in \{1,\dots,F\}$ except if $k \neq l = p$. From \eqref{eqn_A_matrix_bounds} and Proposition \ref{proposition_matrix_bnds}, we observe that $\frac{2}{1 - \overline{\mathbf{a}}} I \succ \left(I - A_{k} \right)^{-1}, \forall k \in \left \llbracket F \right \rrbracket$ and hence our choices are valid.
Now, we substitute the above assumed values of $P_{kl}, P_{kk}$ in \eqref{eqn_Anders_LMI} to see that
\begin{align} \label{thm1_proof_interstep_1}
    &\mathbf{G} 
    \succeq 
    \frac{1}{4\left( \frac{1 - \overline{\mathbf{a}}}{2} - \frac{1}{\gamma^{2}} \right)} 
    \mathbf{F} - \frac{\gamma^{2}}{4} \mathbf{H},
\end{align}
where $\mathbf{F} = \left( (\overline{\mathbf{a}} + \underline{\mathbf{a}})I + \frac{2}{1 - \overline{\mathbf{a}}} BB^{\top}
\right)^{\top} \left( (\overline{\mathbf{a}} + \underline{\mathbf{a}})I + \frac{2}{1 - \overline{\mathbf{a}}} BB^{\top}
\right)$, $\mathbf{G} = \frac{1 + \overline{\mathbf{a}}}{1 - \overline{\mathbf{a}}} I - \frac{1}{(\overline{\mathbf{a}} - 1)^{2}} B B^{\top}$, and $\mathbf{H} = (\overline{\mathbf{a}} - \underline{\mathbf{a}})^{2}I$. Now, let $\beta = \gamma^{2}$. Applying \eqref{eqn_lemma_assumption} in \eqref{thm1_proof_interstep_1} and solving for $\beta$, we arrive at the cubic polynomial in $\beta$ given by \eqref{eqn_beta_cubic_polynomial} (details are given in the appendix of \cite{venkat_dmac}). Then, condition on $\gamma$ to get a finite game value can be readily obtained through the condition on $\sqrt{\beta_{\mathrm{min}}}$, where $\beta_{\mathrm{min}}$ is the minimum positive root of the cubic polynomial \eqref{eqn_beta_cubic_polynomial}. The reasoning behind a finite value for the associated dynamic game is that the control law given by \eqref{eqn_distributed_minimax_control} indicates that every node $i \in \mathcal{V}$ selects the model that best describes the disturbance trajectory modelled using the collected history $(Z^{(i)}(t))$ in a least-square sense and then employs the corresponding optimal distributed $\mathcal{H}_{\infty}$ control law at every time step $t$ by taking the certainty equivalence principle. Then, invoking Theorem 3 from \cite{rantzer2021minimax} along with \eqref{eqn_gamma_dagger_upper_bound}, we are ensured that the control law in the distributed form for every node $i \in \mathcal{V}$ given by \eqref{eqn_distributed_minimax_control} solves Problem \ref{problem_1} with a finite value for dynamic game given in Problem \ref{problem_1} meaning that, $J^{\pi}_{i} < \infty$ and this completes the proof.
\end{proof}

\textbf{Remarks:} Observe that the right hand side of \eqref{eqn_gamma_dagger_upper_bound} is an upper bound for the $\gamma^{\dagger}$ and the dynamic game associated with Problem \ref{problem_1} will have a finite value if and only if $\gamma$ exceeds that quantity. Different bounds are possible to obtain with other choices for matrices $P_{kl} = P_{lk} \succeq P_{kk}$ while invoking Theorem 3 of \cite{rantzer2021minimax} and further \eqref{eqn_lemma_assumption} introduces conservatism. Hence, we do not claim that our result is tight. 

\section{Numerical Simulation}
\label{sec:simulation}

\subsection{Simulation Setup}
To demonstrate our proposed approach, we consider a large-scale buffer network with $N = 10^{4}$ nodes where computing a centralised controller is non-trivial and expensive. We let $b = 0.1$ so that the input matrix for the whole network is simply the scaled incidence matrix, $B = bI$, with $\mathcal{I}$ being the incidence matrix of the associated tree graph. Two different models $(M = 2)$ were generated randomly for each node $i \in \mathcal{V}$ satisfying \eqref{eqn_communication_condition}. The total time horizon was set to be $T = 20$. To simulate the above system with a disturbance signal, we chose a zero mean random signal with covariance of $0.1 I_{N}$. Model number two from the set $\mathbf{A_{i}}$ was picked and fixed to be the true system model governing its dynamics for every node $i \in \mathcal{V}$ throughout the time horizon. For every node $i \in \mathcal{V}$, the distributed minimax adaptive control inputs were computed using \eqref{eqn_distributed_minimax_control} and the distributed $\mathcal{H}_{\infty}$ control inputs were computed as described in Corollary 1 of \cite{lidstrom2017h} using the true $a_{i} \in \mathbf{A_{i}}$ as $u^{\star}_{i}(t) = \frac{bx_{i}(t)}{a_{i} - 1}$.

\subsection{Results \& Discussion}
The bounds from Lemma \ref{lemma_l2_gain_lower_bound} and Theorem \ref{theorem_2} were found out to be $13.5936$ and $16.0161$ respectively. We observed that both the distributed minimax adaptive control policy and the distributed $\mathcal{H}_{\infty}$ control policy were stabilising. That is, the behaviour of the distributed minimax adaptive and the distributed $\mathcal{H}_{\infty}$ controllers get very similar after certain point in time as shown in Figure~\ref{fig_statesControlsDiff}, where the stabilised states of node $1$ is shown for illustrative purpose. The respective control inputs behaving very similar to each other is shown in Figure~\ref{fig_statesControlsDiff} where the first edge input from both the controller is shown. The time when both these policies coincide is still an open problem as we conjecture that there might exist adversarial disturbance policies for some node in the network that can keep the controller continuously guessing about its local dynamics model uncertainty. The code is made available at \url{https://gitlab.control.lth.se/regler/distributed_mac}.

\begin{figure}
    \centering
    \begin{tikzpicture}

\begin{axis}[%
width=3.014in,
height=1.92in,
at={(1.011in,1.556in)},
scale only axis,
xmin=0,
xmax=30,
xlabel style={font=\color{white!15!black}},
xlabel={Time},
ymin=0,
ymax=0.8,
axis background/.style={fill=white},
legend style={legend cell align=left, align=left, draw=white!15!black}
]
\addplot [color=mycolor1, line width=3.0pt, mark=o, mark options={solid, mycolor1}]
  table[row sep=crcr]{%
0	0\\
1	0.0868531450295933\\
2	0.0592396129136521\\
3	0.02292800642409\\
4	0.00703726531601889\\
5	0.00503407952286938\\
6	0.00236484361248401\\
7	0.00475517153609833\\
8	0.00565861426749376\\
9	0.00782849892357232\\
10	0.00126563428970902\\
11	0.0112219585042816\\
12	0.0104068680524106\\
13	0.000552262176294407\\
14	0.00163248382029672\\
15	0.00720819079010328\\
16	0.00141424536453333\\
17	0.003482126274788\\
18	0.00484369925522417\\
19	0.00718440982967506\\
20	0.000222171329609193\\
21	0.00427816915295266\\
22	0.00330925101185076\\
23	0.00167994312998794\\
24	0.00352241622042904\\
25	0.00073999741902124\\
26	0.0022877917507506\\
27	0.00079574438631666\\
28	0.00901269252196008\\
29	0.0122897150768699\\
30	0.0128545299141771\\
};
\addlegendentry{$\left \Vert x^{\dagger}_{1} - x^{\star}_{1} \right \|_{1}$: white noise $w$}

\addplot [color=mycolor2, line width=3.0pt, mark=o, mark options={solid, mycolor2}]
  table[row sep=crcr]{%
0	0.671339457501923\\
1	0.398128632026763\\
2	0.310036870303036\\
3	0.166381340638015\\
4	0.039000507659078\\
5	0.0834422165888566\\
6	0.057811668534194\\
7	0.0181057442361426\\
8	0.0574107908604506\\
9	0.013241255703171\\
10	0.120433896143466\\
11	0.0744914873243598\\
12	0.0848430154706912\\
13	0.00567931865622329\\
14	0.0637462109334223\\
15	0.0192841692286815\\
16	0.00688372173284261\\
17	0.0103133763966439\\
18	0.0108937827510889\\
19	0.0843927763740942\\
20	0.0831113094179744\\
21	0.0068669195373595\\
22	0.0134674527273368\\
23	0.016933142888331\\
24	0.0750133472681332\\
25	0.0360956098994726\\
26	0.0135630311919922\\
27	0.120882226804632\\
28	0.0631228360651968\\
29	0.0658821454257002\\
};
\addlegendentry{$\left \Vert u^{\dagger}_{1} - u^{\star}_{1} \right \|_{1}$: white noise $w$}

\end{axis}
\end{tikzpicture}%
    \caption{The behaviour of the uncertain network controlled by minimax adaptive controller (with superscript $\dagger$) eventually coinciding with that of the distributed $\mathcal{H}_{\infty}$ controller (with superscript $\star$) under the effect of a random adversarial disturbance is depicted here. The difference of states from both controllers corresponding to the first node in the network and the difference of control inputs corresponding to the first edge in the network are shown here.}
    \label{fig_statesControlsDiff}
\end{figure}
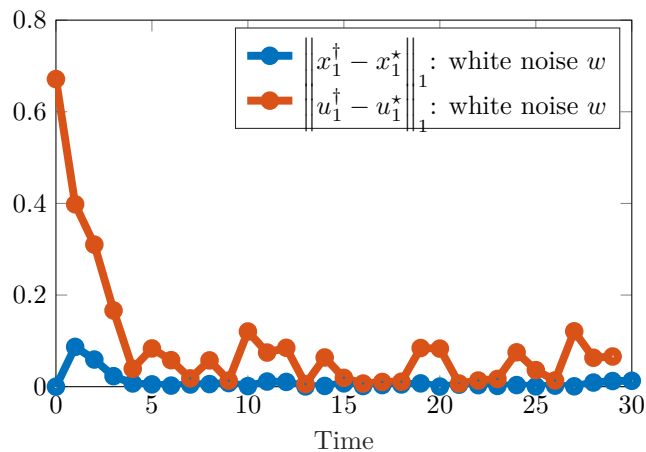


\section{Conclusion}
\label{sec:conclusion}
A distributed minimax adaptive controller for uncertain networked systems was presented in this paper. Based on the local information collected by each node in the network, the distributed minimax adaptive controller selects the best model that minimizes the disturbance trajectory hitting that node and selects the corresponding distributed $\mathcal{H}_{\infty}$ control law. Our proposed distributed implementation scales linearly with the size of the network. Both lower and upper bounds for the associated $\ell_{2}$ gain of the controller were obtained. Future work will seek to extend the framework to uncertain networked systems with output model.

\inArxiv{
\section*{Acknowledgment}
This project has received funding from the European Research Council (ERC) under the European Union’s Horizon 2020 research and innovation program under grant agreement No 834142 (Scalable Control). All authors are members of the ELLIIT Strategic Research Area in Lund University.
}


\inConf{
\bibliographystyle{IEEEtran}
\bibliography{references}
}
\inArxiv{
\bibliographystyle{tmlr}
\bibliography{references}
}

\onecolumn  
\section*{Appendix}
\subsection*{Derivation of Cubic Polynomial in $\beta$}
\noindent Recall \eqref{thm1_proof_interstep_1} from the proof of Theorem \ref{theorem_2} as 
\begin{align*} 
    &\mathbf{G} 
    \succeq 
    \frac{1}{4\left( \frac{1 - \overline{\mathbf{a}}}{2} - \frac{1}{\gamma^{2}} \right)} 
    \mathbf{F} - \frac{\gamma^{2}}{4} \mathbf{H},
\end{align*}
where 
\begin{align*}
\mathbf{F} &= \left( (\overline{\mathbf{a}} + \underline{\mathbf{a}})I + \frac{2}{1 - \overline{\mathbf{a}}} BB^{\top} \right)^{\top} 
\left( (\overline{\mathbf{a}} + \underline{\mathbf{a}})I + \frac{2}{1 - \overline{\mathbf{a}}} BB^{\top} \right),  \\
\mathbf{G} &= \frac{1 + \overline{\mathbf{a}}}{1 - \overline{\mathbf{a}}} I - \frac{1}{(\overline{\mathbf{a}} - 1)^{2}} B B^{\top}, \\
\mathbf{H} &= (\overline{\mathbf{a}} - \underline{\mathbf{a}})^{2}I. 
\end{align*}
From \eqref{eqn_lemma_assumption}, we observe that
\begin{align*}
BB^{\top} 
&> \gamma^{-2} I - (A_{p} - I)^{2} \\
&\geq \left( \gamma^{-2} - (\overline{\mathbf{a}} - 1)^{2} \right) I.
\end{align*}
Using this in the definition of $\mathbf{F}$ and $\mathbf{G}$, we get,
\begin{align*}
&\left(
\frac{1+\overline{\mathbf{a}}}{1-\overline{\mathbf{a}}}
- \frac{1}{\gamma^{2}} 
+ (\overline{\mathbf{a}} - 1)^{2} 
+ \frac{\gamma^{2}}{4} (\overline{\mathbf{a}} - \underline{\mathbf{a}})^{2} 
- \frac{\left( \underline{\mathbf{a}} - \overline{\mathbf{a}} + 2 + \frac{2}{\gamma^{2} (\overline{\mathbf{a}} - 1)} \right)^{2}}{4 \left( \frac{1-\overline{\mathbf{a}}}{2} - \frac{1}{\gamma^{2}} \right)}
\right)
I 
\succeq 0 \\
\iff
&\left(
f_{1} \beta^{3} + f_{2} \beta^{2} + f_{3} \beta + f_{4} 
\right)
I 
\succeq 0, \\
\iff
&\left(
f_{1} \beta^{3} + f_{2} \beta^{2} + f_{3} \beta + f_{4} 
\right)
\geq 0,
\end{align*}
where $\beta = \gamma^{2}$ and
\begin{align*}
&f_{1} 
=
\frac{(1-\overline{\mathbf{a}})(\overline{\mathbf{a}} - \underline{\mathbf{a}})^{2}}{8}, \\
&f_{2}
= \frac{-2\overline{\mathbf{a}}^{3} +4\overline{\mathbf{a}}^{2} -2\underline{\mathbf{a}}^{2} +4\overline{\mathbf{a}} \underline{\mathbf{a}} - 2\overline{\mathbf{a}} -2\underline{\mathbf{a}} }{4}, \\
&f_{3}
=
\frac{4\overline{\mathbf{a}}^{3} -14\overline{\mathbf{a}}^{2} + 16 \overline{\mathbf{a}} - 4\underline{\mathbf{a}} -18 }{4(1 - \overline{\mathbf{a}})} \\
&f_{4}
=
\frac{-1}{(\overline{\mathbf{a}} - 1)^{2}}.
\end{align*}
The solutions $\beta_{1}, \beta_{2}, \beta_{3}$ satisfying the cubic polynomial implies that $\gamma_{1} = \sqrt{\beta_{1}}, \gamma_{2} = \sqrt{\beta_{2}}, \gamma_{3} = \sqrt{\beta_{3}}$ also satisfy it. Since we look for the positive yet tight $\gamma$, it corresponds to $\gamma = \min \{\gamma_{1}, \gamma_{2}, \gamma_{3}\}$ such that $\gamma > 0$. 


\end{document}